\newtheorem{theorem}{Theorem}[section]
\newtheorem{lemma}[theorem]{Lemma}
\theoremstyle{definition}
\newtheorem{definition}{Definition}[theorem]
\newcommand{\nix}[1]{{}}
\begin{document}
\title{Managing Approximation Errors in Quantum Programs} 

\author{Thomas H\"aner}
\email{haenert@phys.ethz.ch}
\affiliation{Microsoft Quantum, Microsoft, Redmond, WA 98052, USA}
\affiliation{Institute for Theoretical Physics, ETH Zurich, 8093 Zurich, Switzerland}

\author{Martin Roetteler}
\email{martinro@microsoft.com}
\affiliation{Microsoft Quantum, Microsoft, Redmond, WA 98052, USA}

\author{Krysta M.~Svore}
\email{ksvore@microsoft.com}
\affiliation{Microsoft Quantum, Microsoft, Redmond, WA 98052, USA}

\begin{abstract}
We address the problem of distributing approximation errors in large-scale quantum programs. It has been known for some time that when compiling quantum algorithms for a fault-tolerant architecture, some operations must be approximated as they cannot be implemented with arbitrary accuracy by the underlying gate set. This leads to approximation errors which often can be grouped along subroutines that the given quantum algorithm is composed of. Typically, choices can be made as to how to distribute approximation errors so that the overall error is kept beneath a user- or application-defined threshold. These choices impact the resource footprint of the fault-tolerant implementation. We develop an automatic approximation error management module to tackle the resulting optimization problems. The module is based on annealing and can be integrated into any quantum software framework. Using the benchmark of simulating an Ising model with transverse field, we provide numerical results to quantify the benefits and trade-offs involved in our approach.
\end{abstract}

\maketitle

\section{Introduction}
Quantum computing promises to solve certain problems much faster than classical devices. These problems include factoring large numbers, which is possible using Shor's algorithm~\cite{shor94}, unstructured search problems~\cite{grover1996fast}, and simulation of quantum mechanical systems. The latter area is where the idea of quantum computing originally emerged~\cite{Feynman1982}.
While only few quantum algorithms are known that offer super-polynomial speedups over their classical counterparts, many polynomial speedups have been found over the recent years. However, most of these algorithmic advances take place at a high level of abstraction and often do not take into account constants that may not be explicit in asymptotic analyses. Therefore, while desirable, actual practical resource estimates of complex, large-scale quantum algorithms remain scarce, although they could be used in various ways: to determine the first practical applications of quantum computers, to identify and remedy bottlenecks in existing quantum algorithms, to determine cross-over points (i.e., the smallest problem size for which it is beneficial to use quantum computers), and for hardware/software co-design in general. 

In order to estimate the required resources of a given quantum algorithm, the high-level representation of the algorithm must be translated to a universal low-level set of operations that can be realized in practice. One of the standard gate sets that is often considered is the so-called Clifford+$T$ gate set, which can be generated from a few single-qubit operations and the CNOT (controlled-NOT) gate. In particular, arbitrary single-qubit rotations must be translated to this discrete gate set employing a \textit{rotation synthesis} algorithm, where the resulting gate sequences get longer as the desired accuracy is increased.
Therefore, besides the problem of compiling abstract high-level functions to the native gate set, the resulting approximation errors need to be managed in a way that ensures that the resulting code performs the desired overall operation within a certain (user-specified) tolerance. We address this problem by introducing a method capable of handling these errors automatically.

\vspace{10pt}\noindent\textbf{The need for approximation.}\label{sec:origin}
While it is not possible to perform error correction over a continuous set of quantum operations (gates), this can be achieved over a discrete gate set such as the aforementioned Clifford+$T$ gate set. As a consequence, certain operations must be approximated using gates from this discrete set. An example is the operation which achieves a rotation around the z-axis,
\[
	\text{Rz}_\theta=\left(\begin{matrix}
	e^{-i\theta/2} & 0\\0 & e^{i\theta/2}
	\end{matrix}\right).
\]
To implement such a gate over Clifford+$T$, synthesis algorithms such as the ones in Refs.~\cite{ross2014optimal,KMM1} can be used. Given the angle $\theta$ of this gate, such a rotation synthesis algorithm will produce a sequence of $\mathcal O(\log{\varepsilon_R^{-1}})$ Clifford+$T$ gates which approximate Rz$_\theta$ up to a given tolerance $\varepsilon_R$. We measure approximation error $\varepsilon_R$ with respect to distance in operator norm. As we focus on unitary channels this is equivalent to diamond distance, i.e., approximation errors can be composed safely. 

In most error correction protocols, the $T$-gate is the most expensive operation to realize, as it cannot be executed natively but requires a distillation protocol to distill many noisy magic states into one good state, which can then be used to apply the gate. As a consequence, it is crucial to reduce the number of these $T$-gates as much as possible in order to allow executing a certain quantum computation.

\tikzset{
	linestyle/.style={
		draw,
		thin
	},
	rectitem/.style={
		rectangle,
		minimum height=2.5em,
		minimum width=4em,
		linestyle,
		fill=white,
		blur shadow={shadow blur steps=15, shadow xshift=2pt, shadow yshift=-2pt, shadow scale=.9, shadow blur radius=1.5ex, shadow opacity=60},
	},
}
\newcommand{\txtsize}[0]{\Large}
\newcommand{\txtsizetwo}[0]{\large}
\begin{figure*}[t]
\resizebox{.8\linewidth}{!}{
\begin{tikzpicture}
\node (zero) at (0,-.5) {\txtsize $\varepsilon$};
\node[rectitem,minimum height=3em,minimum width=7em] (QPE) at (0,-2)  {\txtsize QPE($U$)};
\draw[->,linestyle] (zero)--(QPE);

\node[rectitem] (U1) at (-4,-4) {\txtsize $^cU$};
\node at (-1.9,-4.1) {\txtsize $\cdots$};
\node[rectitem] (U3) at (0,-4) {\txtsize $^cU$};
\node[rectitem] (QFT) at (4,-4) {\txtsize $QFT^\dagger$};

\draw[->,linestyle] (QPE)--(U1);
\draw[->,linestyle] (QPE)--(U3);
\draw[->,linestyle] (QPE)--(QFT);

\node at (-2.5,-3) {\txtsizetwo $\varepsilon_1$};
\node at (.4,-3) {\txtsizetwo $\varepsilon_{k-1}$};
\node at (2.5,-3) {\txtsizetwo $\varepsilon_{k}$};

\node[rectitem] (R1) at (-5.3,-6) {\txtsize $R_1$};
\node[rectitem] (R2) at (-2.8,-6) {\txtsize $R_n$};
\node at (-4,-6) {\txtsize $\cdots$};
\draw[->,linestyle] (U1)--(R1);
\draw[->,linestyle] (U1)--(R2);

\node[rectitem] (R21) at (-1.2,-6) {\txtsize $R_1$};
\node[rectitem] (R22) at (1.2,-6) {\txtsize $R_n$};
\node at (-0,-6) {\txtsize $\cdots$};
\draw[->,linestyle] (U3)--(R21);
\draw[->,linestyle] (U3)--(R22);

\node[rectitem] (CR1) at (2.8,-6) {\txtsize $CR_1$};
\node[rectitem] (CR2) at (5.2,-6) {\txtsize $CR_m$};
\node at (4,-6) {\txtsize $\cdots$};
\draw[->,linestyle] (QFT)--(CR1);
\draw[->,linestyle] (QFT)--(CR2);

\node at (-5.15,-5) {\txtsizetwo $\varepsilon_{k+1}$};
\node at (-2.9,-5) {\txtsizetwo $\varepsilon_{k+n}$};
\node at (-1.2,-5) {\txtsizetwo $\varepsilon_{k+n+1}$};
\node at (1.15,-5) {\txtsizetwo $\varepsilon_{k+2n}$};
\node at (2.7,-5) {\txtsizetwo $\varepsilon_{k+2n+1}$};
\node at (5.3,-5) {\txtsizetwo $\varepsilon_{k+2n+m}$};
\end{tikzpicture}
}
\caption{Abstract depiction of the compilation process for a quantum phase estimation (QPE) applied to a given unitary $U$. The parameters $\varepsilon_i$ which get introduced during the compilation must be chosen such that the overall target accuracy $\varepsilon$ is achieved while reducing the resulting cost as much as possible.}
\label{fig:compilation}
\end{figure*}
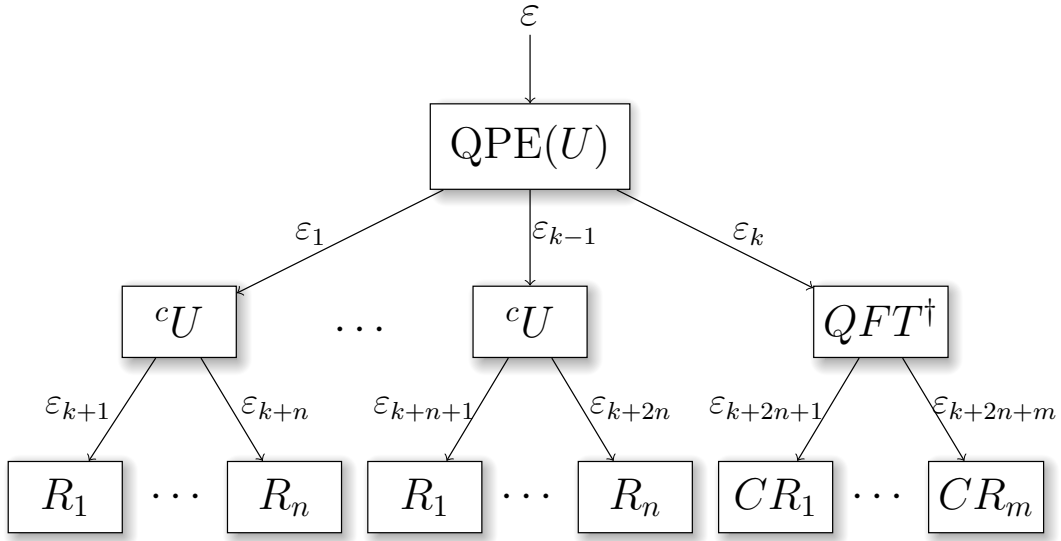

\vspace{10pt}\noindent\textbf{Compilation of quantum programs.}
The job of a quantum program compiler is to translate a high-level description of a given quantum program to hardware-specific machine-level instructions. As in classical computing, such compilation frameworks can be implemented in a hardware-agnostic fashion by introducing backend-independent intermediate representations of the quantum code~\cite{haener2018software}.

During the compilation process, it is crucial to optimize as much as possible in order to reduce the overall depth of the resulting circuit to keep the overhead of the required quantum error correction schemes manageable. Optimizations include quantum versions of constant-folding (such as merging consecutive rotation gates, or even additions by constants) and recognition of compute/action/uncompute sections to reduce the number of controlled gates~\cite{haener2018software}. To allow such optimizations, it is important to introduce multiple layers of abstractions instead of compiling directly down to low-level machine instructions~\cite{haener2018software,Steiger2018projectqopensource}, which would make it impossible to recognize, e.g., two consecutive additions by constants. Even canceling a gate followed by its inverse becomes computationally hard, or even impossible once continuous gates have been approximated.

To translate an intermediate representation to the next lower level of abstraction, a set of decomposition rules is used, some of which introduce additional errors which can be made arbitrarily small at the cost of an increasing circuit size or depth, which in turn implies a larger overhead when applying quantum error correction. Therefore, it is important to choose these error tolerances such that the computation succeeds with high probability given the available resources (number and quality of qubits). See Fig.~\ref{fig:compilation} for an abstract depiction of the compilation process of a quantum phase estimation on a given unitary $U$. At each level of abstraction, the compiler introduces additional accuracy parameters (in the figure denoted by $\varepsilon_i$) which must be chosen such that
\begin{enumerate}[1)]
	\item the overall error lies within the specifications of the algorithm and
	\item the implementation cost is as low as possible while 1) is satisfied.
\end{enumerate}
As mentioned above, it is important to measure approximation errors in a way that is composable to avoid potential issues in underreporting actual approximation errors~\cite{MPGC13,GSVB13} which could be devastating when composing complex quantum algorithms. This leads to diamond distance as the preferred way to measure closeness to the target operation as it composes. As unraveling a complex quantum algorithm eventually leads to primitive gates that are unitary---such as the mentioned $\text{Rz}_\theta$ rotations which are implemented on subsystem of a constant number of qubits---bounding the approximation error in operator norm implies error in diamond norm, i.e., estimates of approximation error can be composed. 

\section{Error-propagation in quantum circuits}\label{sec:error}
The time-evolution of a closed quantum system can be described by a unitary operator. As a consequence, each time-step of our quantum computer can be described by a unitary matrix of dimension $2^n\times 2^n$ (excluding measurement), where $n$ denotes the number of quantum bits (qubits). When decomposing such a quantum operation $U$ into a sequence of lower-level operations $U_M\cdots U_1$, the resulting total error can be estimated from the individual errors $\varepsilon$ of the lower-level gates as follows:

\begin{lemma}\label{lm:unitaryerror}Given a unitary decomposition of $\,U$ such that $U=U_M\cdot U_{M-1}\cdots U_1$ and unitaries $V_i$ which approximate the unitary operators $U_i$ such that $\|V_i-U_i\|<\varepsilon_i\;\forall i$, the total error can be bounded as follows:
\[
	\|U-V_M\cdots V_1\|\leq\sum_{i=1}^M\varepsilon_i.
\]
\end{lemma}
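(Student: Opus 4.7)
The plan is to use a standard telescoping decomposition together with the fact that the operator norm is submultiplicative and that unitary operators have operator norm equal to $1$. I would proceed either by induction on $M$ or, equivalently, by writing the difference as a telescoping sum in one shot.

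First I would set up the telescoping identity
\[
 U_M\cdots U_1 - V_M\cdots V_1 \;=\; \sum_{i=1}^{M} V_M\cdots V_{i+1}\,(U_i - V_i)\,U_{i-1}\cdots U_1,
\]
which one verifies by noting that adjacent terms in the sum cancel (with the conventions that the empty products for $i=1$ and $i=M$ are the identity). Then I would apply the triangle inequality in operator norm, followed by submultiplicativity $\|AB\|\le\|A\|\,\|B\|$ on each summand.

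At that point, the key observation is that every factor other than $(U_i - V_i)$ is unitary, and unitaries have operator norm $1$. Hence each summand is bounded above by $\|U_i - V_i\| < \varepsilon_i$, and summing over $i$ yields the desired bound $\sum_{i=1}^M \varepsilon_i$.

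The argument is essentially mechanical; the only mild subtlety is bookkeeping the telescoping identity correctly (handling the boundary cases $i=1$ and $i=M$ where one of the two surrounding products is empty). An inductive phrasing sidesteps this: assume the bound for $M-1$ factors, then write
\[
 U_M\cdots U_1 - V_M\cdots V_1 = U_M(U_{M-1}\cdots U_1 - V_{M-1}\cdots V_1) + (U_M - V_M)V_{M-1}\cdots V_1,
\]
and apply the triangle inequality together with $\|U_M\|=1$ and $\|V_{M-1}\cdots V_1\|=1$ to combine the inductive hypothesis with $\|U_M-V_M\|<\varepsilon_M$. Either route is routine; there is no real obstacle beyond making sure the unitarity of the surviving factors is invoked to keep their norms equal to one.
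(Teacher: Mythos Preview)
Your proposal is correct and matches the paper's approach: the paper invokes the ``hybrid argument'' based on the triangle inequality and submultiplicativity of the operator norm together with $\|U\|\leq 1$, which is precisely your telescoping/induction argument. The inductive split you wrote in the final display is exactly the step the paper has in mind.
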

This lemma can be shown straightforwardly from the `hybrid argument'   \cite{BV97} based on triangle inequality and submultiplicativity of $\|\cdot\|$ with $\|U\|\leq 1$. 

\nix{
\begin{proof}
By induction using the triangle inequality and submultiplicativity of $\|\cdot\|$ with $\|U\|\leq 1$. The base case $M=2$ can be proven as follows:
\begin{align*}
\|U_2U_1 - V_2V_1\| &= \|U_2U_1 - U_2V_1 + U_2V_1 - V_2V_1\|\\
&\leq \|U_2(U_1-V_1)\| + \|(U_2-V_2)V_1\|\\
&\leq \varepsilon_1 + \varepsilon_2.
\end{align*}
The induction step $P(M-1)\rightarrow P(M)$ can be shown in a similar fashion
\begin{align*}
\|U_M\cdots U_1 - V_M\cdots V_1\| &= \|U_M\cdots U_1 - U_M\cdots U_2V_1 +\\&\phantom{= \|} U_M\cdots U_2V_1 - V_M\cdots V_1\|\\
&\leq \|U_M\cdots U_2(U_1-V_1)\|+\\&\phantom{=} \|(U_M\cdots U_2 - V_M\cdots V_2)V_1\|\\
&\leq \varepsilon_1 + \sum_{i=2}^M\varepsilon_i=\sum_{i=1}^M\varepsilon_i.
\end{align*}
\end{proof}
}

Note that using only this Lemma in the compilation process to automatically optimize the individual $\varepsilon_i$ would make the resulting optimization problem infeasibly large. What is even worse is that the number of parameters to optimize would vary throughout the optimization process since the number of lower-level gates changes when implementing a higher-level operation at a different accuracy, which in turn changes the number of distinct $\varepsilon_i$. To address these two issues, we introduce Theorem~\ref{thm:decomp} which generalizes Lemma~\ref{lm:unitaryerror}. First, however, we require a few definitions.

\begin{definition}
Let $V_{M(\varepsilon)}\cdots V_1$ be an approximate decomposition of the target unitary $U$ such that $\|V_{M(\varepsilon)}\cdots V_1\|\leq\varepsilon$. A set of subroutine sets $\mathcal S(U, \varepsilon)=\{S_1,...,S_K\}$ is a \textit{partitioning of subroutines of $\,U$} if $\forall i\exists !k : V_i\in S_k$ and we denote by $S(V)$ the function which returns the subroutine set $S$ such that $V\in S$.
\end{definition}

Such a partitioning will be used to assign to each $V_i$ the accuracy $\varepsilon_{S(V_i)}=\varepsilon_{S_k}$ with which all $V_i\in S_k$ are implemented. In order to decompose the cost of $\,U$, however, we also need the notion of a \textit{cost-respecting partitioning of subroutines of $\,U$} and the costs of its subsets:

\begin{definition}
Let $\mathcal S(U, \varepsilon)=\{S_1,...,S_K\}$ be a set of subroutine sets. $\mathcal S(U, \varepsilon)$ is a \textit{cost-respecting partitioning of subroutines of $\,U$} w.r.t. a given cost measure $C(U, \varepsilon)$ if $\forall \varepsilon,i,j,k: ( V_i\in S_k\land V_j\in S_k\Rightarrow C(V_i,\varepsilon)=C(V_j,\varepsilon))$. The cost of a subroutine set $S$ is then well-defined and given by $C(S, \varepsilon):=C(V, \varepsilon)$ for any $V\in S$.
\end{definition}

With these definitions in place, we are now ready to generalize Lemma~\ref{lm:unitaryerror}.
\begin{theorem}\label{thm:decomp}
Let $\mathcal S(U, \varepsilon)=\{S_1,...,S_K\}$ be a cost-respecting partitioning of subroutines for a given decomposition of $\,U$ w.r.t. the cost measure $C(U, \varepsilon)$ denoting the number of elementary gates required to implement $\,U$. Then the cost of $\,U$ can be expressed in terms of the costs of all subroutine sets $S\in\mathcal S(U, \varepsilon_U)$ as follows
\begin{align*}
	C(U, \varepsilon)&=\sum_{S\in \mathcal S(U, \varepsilon_U)} C(S,\varepsilon_{S})f_S(\varepsilon_U)\\
	&\text{with } \sum_{S\in \mathcal S(U, \varepsilon_U)} \varepsilon_S f_S(\varepsilon_U)\leq\varepsilon-\varepsilon_U,
\end{align*}
where $f_S(\varepsilon_U)$ gives the number of subroutines in the decomposition of $\,U$ that are in $S$, given that the decomposition of $\,U$ would introduce error $\varepsilon_U$ if all subroutines were to be implemented exactly and $\varepsilon_S$ denotes the error in implementing subroutines that are in~$S$.
\end{theorem}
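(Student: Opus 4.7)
The plan is to fix, for a chosen target decomposition error $\varepsilon_U$, the approximate decomposition $V_{M(\varepsilon_U)}\cdots V_1$ guaranteed by the definition of $\mathcal{S}(U,\varepsilon_U)$, and then expand both sides of the statement by counting factors per subroutine set. By definition, $f_S(\varepsilon_U)$ is the number of indices $i$ with $V_i\in S$, so $\sum_{S\in\mathcal{S}(U,\varepsilon_U)} f_S(\varepsilon_U) = M(\varepsilon_U)$. I would treat the cost identity and the error inequality as two independent bookkeeping steps.

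For the cost identity, I would implement each $V_i$ at the accuracy $\varepsilon_{S(V_i)}$ assigned by the partitioning. Because $\mathcal{S}$ is cost-respecting, any two $V_i, V_j$ in the same $S$ satisfy $C(V_i,\varepsilon_S) = C(V_j,\varepsilon_S) = C(S,\varepsilon_S)$, so the total elementary-gate count factorizes as
\[
C(U,\varepsilon) \;=\; \sum_{i=1}^{M(\varepsilon_U)} C\!\left(V_i,\varepsilon_{S(V_i)}\right) \;=\; \sum_{S\in\mathcal{S}(U,\varepsilon_U)} C(S,\varepsilon_S)\, f_S(\varepsilon_U),
\]
which is the first line of the theorem.

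For the error constraint, I would apply Lemma~\ref{lm:unitaryerror} to the product of implementations $\tilde V_i$ of the $V_i$, obtaining
\[
\|V_{M(\varepsilon_U)}\cdots V_1 - \tilde V_{M(\varepsilon_U)}\cdots \tilde V_1\| \;\le\; \sum_{i=1}^{M(\varepsilon_U)} \varepsilon_{S(V_i)} \;=\; \sum_{S\in\mathcal{S}(U,\varepsilon_U)} \varepsilon_S\, f_S(\varepsilon_U),
\]
where the regrouping again uses that $\varepsilon_{S(V_i)}$ depends only on which $S$ contains $V_i$. Combining with the decomposition error $\varepsilon_U$ via the triangle inequality gives a total error bound of $\varepsilon_U + \sum_S \varepsilon_S f_S(\varepsilon_U)$, and requiring this to be at most $\varepsilon$ yields the claimed constraint $\sum_S \varepsilon_S f_S(\varepsilon_U) \le \varepsilon - \varepsilon_U$.

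The main subtlety, and the point on which I would focus care, is not analytic but notational: the quantities $M(\varepsilon_U)$ and $f_S(\varepsilon_U)$ depend on the chosen decomposition-level accuracy $\varepsilon_U$, so one must be explicit that the \emph{same} decomposition is used when counting costs and when bounding errors. Once this is fixed, both parts follow from a single invocation of Lemma~\ref{lm:unitaryerror}, one triangle inequality against $\|U - V_{M(\varepsilon_U)}\cdots V_1\|\le\varepsilon_U$, and two rearrangements of a finite sum by partition class; no further technical obstacle arises.
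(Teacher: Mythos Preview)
Your proposal is correct and follows essentially the same route as the paper: the cost identity is obtained by summing $C(V_i,\varepsilon_{S(V_i)})$ and regrouping by partition class using the cost-respecting property (the paper writes $f_S(\varepsilon_U):=|\{i:V_i\in S\}|$ explicitly), and the error bound comes from one application of Lemma~\ref{lm:unitaryerror} to $\|\tilde U - V\|$ followed by the triangle inequality against $\|U-\tilde U\|\le\varepsilon_U$. Your remark that the same decomposition must be fixed for both counts is exactly the care the paper takes implicitly by introducing $\tilde U$ and $V$ as distinct intermediaries.
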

\begin{proof}
It is easy to see that the cost $C(U,\varepsilon)$ can be decomposed into a sum of the costs of all subroutines $V_i$. Furthermore, since $\varepsilon_{V}=\varepsilon_{S}$ $\forall V\in S$,
\begin{align*}
	C(U,\varepsilon)&=\sum_i C(V_i, \varepsilon_{V_i})\\&=\sum_i C(V_i, \varepsilon_{S(V_i)})\\&=\sum_{S\in\mathcal S}|\{i:V_i\in S\}|C(S, \varepsilon_{S})
\end{align*}
and $f_S(\varepsilon_U):=|\{i:V_i\in S\}|$ $\forall S\in\mathcal S(U,\varepsilon_U)$.

\noindent
To prove that the overall error remains bounded by $\varepsilon$, let $\tilde U$ denote the unitary which is obtained by applying the decomposition rule for $U$ with accuracy $\varepsilon_U$, i.e., $\|U-\tilde U\|\leq\varepsilon_U$ (where all subroutines are implemented exactly). Furthermore, let $V$ denote the unitary which will ultimately be executed by the quantum computer, i.e., the unitary which is obtained after all decomposition rules and approximations have been applied. By the triangle inequality and Lemma~\ref{lm:unitaryerror},
\begin{align*}
	\|U-V\|&\leq\|U-\tilde U\|+\|\tilde U-V\|\\&\leq \varepsilon_U + \sum_{S\in\mathcal S(U,\varepsilon_U)}\varepsilon_S f_S(\varepsilon_U)\\&\leq \varepsilon
\end{align*}
\end{proof}
In Fig.~\ref{fig:compilation}, for example, the left-most $^cU$ box gets $\varepsilon_1$ as its error budget. Depending on the implementation details of $^cU$, some of this budget may already be used to decompose $^cU$ into its subroutines, even assuming that all subroutines of $^cU$ are implemented exactly. The remaining error budget is then distributed among its subroutines, which is exactly the statement of the above theorem.

\begin{figure*}[t]
	\includegraphics[width=.9\linewidth]{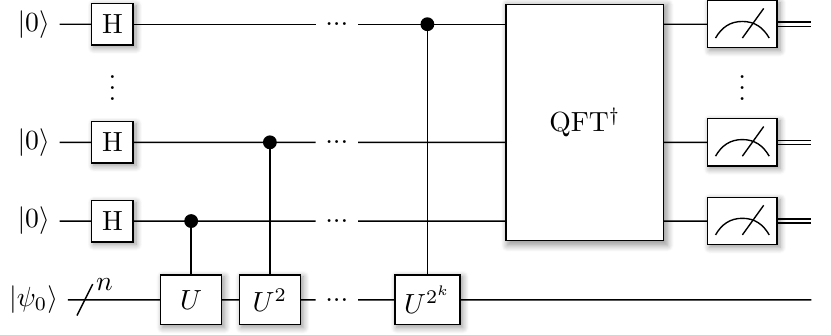}
	\caption{Quantum circuit of a quantum phase estimation applied to the time evolution operator $U=e^{-itH}$, where $H$ is the Hamiltonian of the quantum system being simulated, e.g., a transverse-field Ising model as in the text. After the inverse quantum Fourier transform (QFT$^\dagger$), a measurement yields the phase which was picked up by the input state. For the ground state $\ket{\psi_0}$, this is $U\ket{\psi_0}=e^{-iHt}\ket{\psi_0}=e^{-iE_0 t}\ket{\psi_0}$, allowing to extract (a $(k+1)$-bit approximation of) the energy $E_0$ of $\ket{\psi_0}$.}
	\label{fig:qpe}
\end{figure*}

The decomposition of the cost can be performed at different levels of granularity. This translates into, e.g., having a larger set $\mathcal S(U,\varepsilon)$ and more functions $f_S(\varepsilon_U)$ that are equal to 1. The two extreme cases are
\begin{enumerate}
	\item $f_S(\varepsilon)=1\;\forall S\in\mathcal S(U,\varepsilon)$, $|\mathcal S(U,\varepsilon)|=\#$gates needed to implement $U$:
	
	A different $\varepsilon_U$ for each gate
	
	\item $f_S(\varepsilon)=\#$gates needed to implement $U\;\forall S\in\mathcal S(U,\varepsilon)$, $|\mathcal S(U,\varepsilon)|=1$:
	
	The same $\varepsilon_\varnothing$ for all gates
	
\end{enumerate}
Therefore, this solves the first issue of Lemma~\ref{lm:unitaryerror}: In a practical implementation, the size of the set $\mathcal S(U,\varepsilon)$ can be adaptively chosen such that the resulting optimization problem which is of the form
\begin{align*}
&{ }(\varepsilon_{S_1}^\star,\cdots,\varepsilon_{S_N}^\star)\in\arg\min C_\text{Program}(\varepsilon_{S_1}^{},\cdots,\varepsilon_{S_N}^{})\\
&\text{such that }\; { }\varepsilon_\text{\tiny Program}(\varepsilon_{S_1}^{\star},\cdots,\varepsilon_{S_N}^{\star})\leq \varepsilon
\end{align*}
for a user- or application-defined over-all tolerance $\varepsilon$, can be solved using a reasonable amount of resources. Moreover, the costs of optimization can be reduced by initializing the initial trial parameters $\varepsilon_{S_i}^{}$ to the corresponding solution accuracies of a lower-dimensional optimization problem where $\mathcal S(U,\varepsilon)$ had fewer distinct subroutines. This approach is similar to multi-grid schemes which are used to solve partial differential equations.

The second issue with a direct application of Lemma~\ref{lm:unitaryerror} is the varying number of optimization parameters, which is also resolved by Theorem~\ref{thm:decomp}. Of course one can simply make $\mathcal S(U,\varepsilon)$ tremendously large such that most of the corresponding $f_S(\varepsilon)$ are zero. This, however, is a rather inefficient solution which would also be possible when using Lemma~\ref{lm:unitaryerror} directly. A better approach is to inspect $\mathcal S(U,\varepsilon)$ for different $\varepsilon$ and to then choose $A$ auxiliary subroutine sets $S^a_1,...,S^a_A$ such that each additional subroutine $V_k^a$ which appears when changing $\varepsilon$ (but is not a member of any $S$ of the original $\mathcal S(U,\varepsilon)$) falls into exactly one of these sets. The original set $\mathcal S(U,\varepsilon)$ can then be extended by these auxiliary sets before running the optimization procedure. Again, the level of granularity of these auxiliary sets and thus the number of such sets $A$ can be tuned according to the resources that are available to solve the resulting optimization problem.

\section{Example application}\label{sec:tfim}

As an example application, we consider the simulation of a quantum mechanical system called the \textit{transverse-field Ising model} (TFIM), which is governed by the Hamiltonian
\[
	\hat H=-\sum_{\langle i,j\rangle} J_{ij}\sigma_z^{i}\sigma_z^{j}-\sum_i\Gamma_i\sigma_x^i,
\]
where $J_{ij}$ are coupling constants and $\Gamma_i$ denotes the strength of the transverse field at location $i$. $\sigma_x^i$ and $\sigma_z^i$ are the Pauli matrices, i.e.,
\[
	\sigma_x=\left(\begin{matrix}0 & 1\\1 & 0\end{matrix}\right)\;\;\text{ and }\;\;
	\sigma_z=\left(\begin{matrix}1 & 0\\0 & -1\end{matrix}\right)
\]
acting on the $i$-th spin. The sum over $\langle i,j\rangle$ loops over all pairs of sites $(i,j)$ which are connected. In our example, this corresponds to nearest-neighbor sites on a one-dimensional spin chain (with periodic boundary conditions) of length $N$. Given an approximation $\ket{\tilde{\psi_0}}$ to the ground state $\ket{\psi_0}$ of $\hat H$, we would like to determine the ground state energy $E_0$ such that
\[
	\hat H\ket{\psi_0}=E_0\ket{\psi_0}.
\]
It is well-known that quantum phase estimation (QPE) can be used to achieve this task which leads to a general circuit structure as in Fig.~\ref{fig:qpe}. 

\vspace{10pt}\noindent\textbf{Individual compilation stages.}
We now analyze the QPE algorithm for TFIM ground state estimation and the resulting optimization problem for approximation errors. First note that if the overlap between $\ket{\psi_0}$ and $\ket{\tilde\psi_0}$ is large, a successful application of QPE followed by a measurement of the energy register will collapse the state vector onto $\ket{\psi_0}$ and output $E_0$ with high probability (namely $p=|\braket{\tilde\psi_0|\psi_0}|^2$). 

There are various ways to implement QPE~\cite{nielsen2002quantum}, but the simplest to analyze is the coherent QPE followed by a measurement of all control qubits, see Fig.~\ref{fig:qpe} for an illustration of the circuit. This procedure requires ${16\pi}/{\varepsilon_\text{QPE}}$ applications of (the controlled version of) the time-evolution operator $U_\delta=\exp(-i\delta\hat H)$ for a success probability of $1/2$, where $\varepsilon_\text{QPE}$ denotes the desired accuracy (bit-resolution of the resulting eigenvalues)~\cite{Reiher7555}. Using a Trotter decomposition of $U_\delta$, i.e., for large $M$
\begin{align*}
	U_\delta&\approx \left( U^J_\frac \delta M U^\Gamma_\frac\delta M\right)^M\\&=\left( e^{-i\frac\delta M\sum_{i}J_{i,i+1}\sigma_z^i\sigma_z^{i+1}} e^{-i\frac\delta M\sum_i \Gamma_i\sigma_x^i}\right)^M\\
	&=\left( \prod_ie^{-i\frac\delta MJ_{i,i+1}\sigma_z^i\sigma_z^{i+1}} \prod_ie^{-i\frac\delta M\Gamma_i\sigma_x^i}\right)^M,
\end{align*}
allows to implement the global propagator $U_\delta$ using a sequence of local operations. These consist of z- and x-rotations in addition to nearest-neighbor CNOT gates to compute the parity (before the z-rotation and again after the z-rotation to uncompute the parity). The rotation angles are $\theta_z=2\frac{\delta}{M}J_{i,i+1}$ and $\theta_x=-2\frac{\delta}{M}\Gamma_i$ for z- and x-rotations, respectively. The extra factor of two arises from the the definitions of the Rz and Rx gates, see Sec.~\ref{sec:origin}.

In order to apply error correction to run the resulting circuit on actual hardware, these rotations can be decomposed into a sequence of Clifford+$T$ gates using rotation synthesis. Such a discrete approximation up to an accuracy of $\varepsilon_R$ features $\mathcal O(\log\varepsilon_R^{-1})$ T-gates if the algorithms in~\cite{ross2014optimal,KMM1} are used, where even the constants hidden in the $\mathcal O$ notation were explicitly determined.

\vspace{10pt}\noindent\textbf{Casting the example into our framework.}
The first compilation step is to resolve the QPE library call. In this case, it is known that the cost of QPE applied to a general propagator $U$ is
\[
	C(\text{QPE}_U, \varepsilon) = \frac{16\pi}{\varepsilon_\text{QPE}} C(^cU,\varepsilon_U),
\]
where $^cU$ denotes the controlled version of the unitary $U$, i.e.,
\[
	^cU := \Ket 0\Bra 0\otimes\mathbbm 1 + \Ket 1\Bra 1\otimes U.
\]
Furthermore, the chosen tolerances must satisfy
\[
	\frac{16\pi}{\varepsilon_\text{QPE}}\varepsilon_U \leq \varepsilon - \varepsilon_\text{QPE}.
\]
The next step is to approximate the propagator using a Trotter decomposition. Depending on the order of the Trotter formula being used, this yields
\begin{align*}
	C(^cU,\varepsilon_U) &= M(\varepsilon_\text{Trotter}) (C(^cU_1,\varepsilon_{U_1}) + C(^cU_2,\varepsilon_{U_2}))\\&\text{ with }\; M(\varepsilon_\text{Trotter})(\varepsilon_{U_1}+\varepsilon_{U_2})\leq\varepsilon_U-\varepsilon_\text{Trotter}.
\end{align*}
In the experiments section, we will choose $M(\varepsilon_{\text{Trotter}})\propto\frac 1{\sqrt{\varepsilon_{\text{Trotter}}}}$ as an example. Finally, approximating the (controlled) rotations in $^cU_1$ and $^cU_2$ by employing rotation synthesis,
\begin{align*}
	C(^cU_i,\varepsilon_{U_i}) = 2N\cdot 4\log\varepsilon_R^{-1}\\\text{ with }\; 2N\varepsilon_R\leq\varepsilon_{U_i}\;\text{ for }i\in\{1,2\}.
\end{align*}
Collecting all of these terms and using that $C(^cU_1,\cdot) = C(^cU_2,\cdot)$ yields
\begin{align*}
C(\text{QPE}_U,\varepsilon) = \frac{16\pi}{\varepsilon_\text{QPE}}M(\varepsilon_\text{Trotter})\cdot 2\cdot 2N\cdot 4\log\varepsilon_R^{-1},\\
\text{ with }\varepsilon_\text{QPE} + \frac{16\pi}{\varepsilon_\text{QPE}}(2M(\varepsilon_\text{Trotter})\cdot 2N\varepsilon_R + \varepsilon_\text{Trotter})\leq\varepsilon.
\end{align*}
Note that this example is a typical application for a quantum computer which at the same time can serve as a proxy for other, more complex simulation algorithms. 

While the individual compilation stages may be different for other applications, the basic principle of iterative decomposition and approximation during compilation is ubiquitous. In particular, a similar compilation procedure would be employed when performing quantum chemistry simulations, be it using a Trotter-based approach~\cite{Wecker14a} or an approach that is based on a truncated Taylor series~\cite{berry2015simulating}.

\lstset{language=python,numbers=left,numberstyle=\footnotesize,frame=single,tabsize=4}
\begin{lstlisting}[float=t,escapeinside={(*}{*)},label=lst:annealingcode,caption={High-level description of the annealing-based algorithm to solve the resulting optimization problem. The actual implementation features different scaling constants for $\Delta E$ depending on the mode (error reduction vs. cost reduction).},morekeywords={rnd,floor},captionpos=b,numberstyle=\footnotesize\textcolor{gray},numbers=none,frame=top bottom,commentstyle=\itshape\textcolor{gray},basicstyle=\footnotesize]
(*$\beta$*) = 0
(*$\overline\varepsilon$*) = (*$\left(0.1, 0.1, \cdots, 0.1\right)$*)
cost = get_cost((*$\overline\varepsilon$*))
error = get_total_error((*$\overline\varepsilon$*))
for step in range(num_steps):
	i = floor(rnd() * len(eps))
	old_(*$\overline\varepsilon$*) = (*$\overline\varepsilon$*)
	if rnd() < 0.5:
		(*$\overline\varepsilon_i$*) *= 1 + (1 - rnd()) * (*$\delta$*)
	else:
		(*$\overline\varepsilon_i$*) /= 1 + (1 - rnd()) * (*$\delta$*)
	
	if error <= goal_error:
		# reduce cost
		(*$\Delta E$*) = get_cost((*$\overline\varepsilon$*)) - cost
	else:
		# reduce error
		(*$\Delta E$*) = get_total_error((*$\overline\varepsilon$*)) - error
	(*$p_\text{accept}$*) = min(1, (*$e^{-\beta \Delta E}$*))
	if rnd() > p(*$_\text{accept}$*):
		(*$\overline\varepsilon$*) = old_(*$\overline\varepsilon$*)
	(*$\beta$*) += (*$\Delta\beta$*)
\end{lstlisting}

\bigskip
\section{Implementation and numerical results}\label{sec:results}

In this section, we present implementation details and numerical results of our error management module. While the optimization procedure becomes harder for fine-grained cost and error analysis, the benefits in terms of the cost of the resulting circuit are substantial.

\vspace{10pt}\noindent\textbf{Optimization methodology.}
We use a two-mode annealing procedure for optimization, in which two objective functions are reduced as follows: The first mode is active whenever the current overall error is larger than the target accuracy $\varepsilon$. In this case, it performs annealing until the target accuracy has been reached. At this point, the second mode becomes active. It performs annealing-based optimization to reduce the circuit cost function. After each such step, it switches back to the error-reduction subroutine if the overall error increased above $\varepsilon$.

Both annealing-based optimization modes follow the same scheme, which consists of increasing/decreasing a randomly chosen $\varepsilon_i$ by multiplying/dividing it by a random factor $f\in(1,1+\delta]$, where $\delta$ can be tuned to achieve an acceptance rate of roughly $50\%$. Then, the new objective function value is determined, followed by either a rejection of the proposed change in $\varepsilon_i$ or an acceptance with probability
\[
	p_{\text{accept}}=\min(1, e^{-\beta\Delta E}),
\]
where $\beta=T^{-1}$ and $T$ denotes the annealing temperature. This means, in particular, that moves which do not increase the energy, i.e., $\Delta E \leq 0$ are always accepted. The pseudo-code of this algorithm can be found in Listing~\ref{lst:annealingcode}.

\begin{figure}[htb]
\centering
\subfigure[{Additional optimization allows to reduce the circuit cost by almost a factor of two over the first encountered feasible solution (see inset).\label{fig:erroronly}}]{
\resizebox{\linewidth}{!}{\input{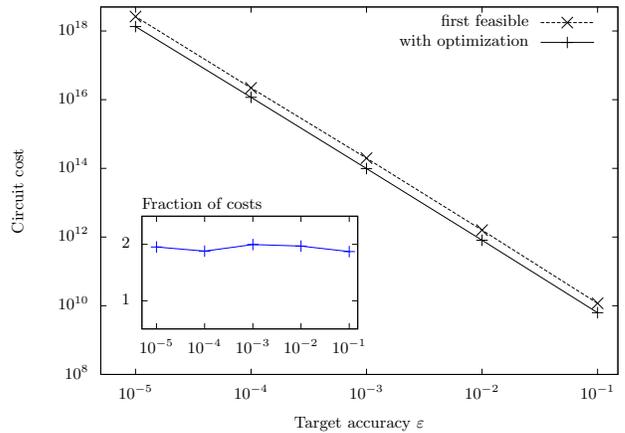}}}
\subfigure[{Performing a three-variable optimization enables a reduction of the resulting circuit cost by several orders of magnitude when compared to two-variable optimization.\label{fig:numparams}}]{
\resizebox{\linewidth}{!}{\input{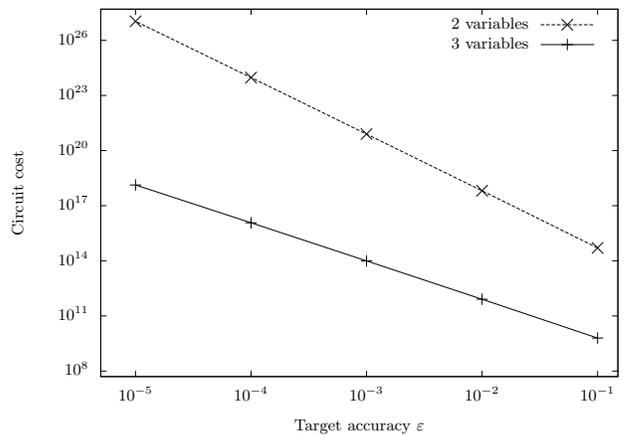}}}

\caption{Numerical results for the optimization problem resulting from the transverse-field Ising model example discussed in Sec.~\ref{sec:tfim}. Improving the first encountered feasible solution by further optimization allows to reduce the cost by almost a factor of two and the number of different parameters can influence the resulting cost by several orders of magnitude.}
\end{figure}

\vspace{10pt}\noindent\textbf{Results.} Using the example of a transverse-field Ising model which was discussed in Sec.~\ref{sec:tfim}, we determine the benefits of our error management module by running two experiments. The first experiment aims at assessing the difference between a feasible solution, i.e., values $\varepsilon_i$ which produce an overall error that is less than the user-defined tolerance, and an optimized feasible solution. In the first case, we only run the first mode until a feasible solution is obtained and in the latter, we employ both modes as outlined above. Fig.~\ref{fig:erroronly} depicts the costs of the resulting circuit as a function of the desired overall accuracy $\varepsilon$. 

\begin{figure}[tb]
\centering
\resizebox{\linewidth}{!}{\input{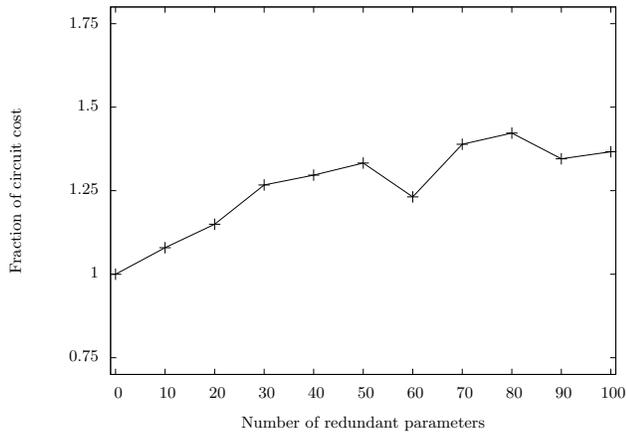}}
\caption{The circuit cost with $i\in\{0,10,...,100\}$ redundant parameters divided by the cost achieved with no redundancies. As expected, the problem becomes harder to optimize as more parameters are added. The best result of 1000 runs (with different random number seeds) is reported. 5000 annealing steps to $\beta_{max}=10$ were performed for each run.}
\label{fig:rob1}
\end{figure}

\begin{figure}[b]
\centering
\resizebox{\linewidth}{!}{\input{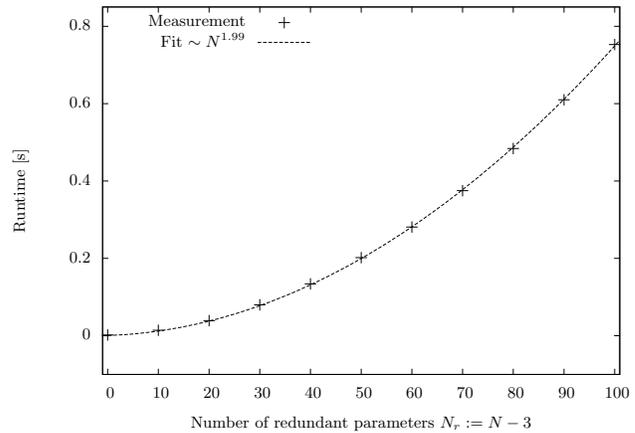}}
\caption{Runtime for finding an initial feasible solution which was then optimized further in order to reduce the circuit cost. As expected, the time increases quadratically with the number of parameters.\vspace{10pt}}
\label{fig:rob2}
\end{figure}

The second experiment aims to show the benefit of an increased number of $\varepsilon_i$ parameters in the same example. The difference between the circuit costs when using just two such parameters (i.e., setting $\varepsilon_R=\varepsilon_{\text{Trotter}}$) versus using all three is depicted in Fig.~\ref{fig:numparams}.

Finally, we measure the robustness of the optimization procedure by introducing redundant parameters, i.e., additional rotation gate synthesis tolerances $\varepsilon_{R_i}$, where the optimal choice would be $\varepsilon_R=\varepsilon_{R_i}=\varepsilon_{R_j}$ for all $i,j$. However, because the resulting optimization problem features more parameters, it is harder to solve and the final circuit cost is expected to be higher.
In addition, the time it takes to find an initial feasible solution will grow. See Figs.~\ref{fig:rob1} and~\ref{fig:rob2} for the results which indicate that this approach is scalable to hundreds of variables if the goal is to find a feasible solution. However, as the number of parameters grows, it becomes increasingly harder to simultaneously optimize for the cost of the circuit. This could be observed, e.g., with 100 additional (redundant) parameters, where further optimization of the feasible solution reduced the cost from $1.65908\cdot 10^{12}$ to $1.10752\cdot 10^{12}$, which is far from the almost $2x$ improvement which was observed for smaller systems in Fig.~\ref{fig:erroronly}. Also, the scaling of the runtime in Fig.~\ref{fig:rob2} can be explained since new updates are proposed by selecting $i\in [0,...,N-1]$ uniformly at random (followed by either increasing or decreasing $\varepsilon_i$). Due to this random walk over $i\in[0,...,N-1]$, the overall runtime is also expected to behave like the expected runtime of a random walk and, therefore, to be in $\mathcal O(N^2)$.

\section{Summary and future work}\label{sec:outlook}
We have presented a methodology for managing approximation errors in compiling quantum algorithms. Given that the way in which the overall target error is distributed among subroutines greatly influences the resource requirements, it is crucial to optimize this process, in particular for large-scale quantum algorithms that are composed of many subroutines. We leverage an annealing-based procedure to find an initial feasible solution which then is optimized further. 

Our scheme for error management only addresses errors that occur in approximations during the compilation process into a fault-tolerant gate set. Future work might include hardware errors, e.g., systematic over- or under-rotations of gates performed by the target device. Furthermore, additional numerical studies for various quantum algorithms can be performed in order to arrive at heuristics for choosing the number of optimization parameters. Moreover, building on our error management methodology, one can automate the entire process of resource estimation for certain subclasses of quantum algorithms. This would yield a useful tool for assessing the practicality of known quantum algorithms, similar to the analysis carried out manually in~\cite{Reiher7555}.

\end{document}